\documentclass[letterpaper]{article} 
\usepackage{aaai2026}  
\usepackage{times}  
\usepackage{helvet}  
\usepackage{courier}  
\usepackage[hyphens]{url}  
\usepackage{graphicx} 
\usepackage{natbib}  
\usepackage{caption} 
\usepackage{algorithm}
\usepackage{algorithmic}

\usepackage{amsfonts,amsmath,amssymb,amsthm}

\newtheorem{theorem}{Theorem}
\newtheorem{corollary}{Corollary}
\newtheorem{assumption}{Assumption}

\theoremstyle{definition}
\newtheorem{definition}{Definition}
\DeclareMathOperator{\E}{\mathbb{E}}

\newcommand{\optAct}[2]{A^{#1,*}(#2)}

\newcommand{\optActOuter}[2]{\bar{A}^{#1,*}(#2)}

\newcommand{\optValueVec}[1]{V^{\ast}_{\Pi_i}}
\newcommand{\optValue}[2]{\optValueVec{#1}(#2)}
\newcommand{\approxValueVec}[1]{\tilde{V}_{\Pi_{#1}}}

\newcommand{\overValueVec}[1]{\overline{V}_{\Pi_{#1}}}
\newcommand{\underValueVec}[1]{\underline{V}_{\Pi_{#1}}}
\newcommand{\overValue}[2]{\overValueVec{#1}(#2)}
\newcommand{\underValue}[2]{\underValueVec{#1}(#2)}

\usepackage{newfloat}
\usepackage{listings}
\DeclareCaptionStyle{ruled}{labelfont=normalfont,labelsep=colon,strut=off} 
\floatstyle{ruled}
\newfloat{listing}{tb}{lst}{}
\floatname{listing}{Listing}

\usepackage{pgfplotstable}
\usepackage{pgfplots}
\pgfplotsset{compat=1.18}
\title{Social Laws for Multi-agent Coordination in Stochastic Environments}
\author{
    Rolando Fernandez\textsuperscript{\rm 1},
    Caleb Probine\textsuperscript{\rm 1},
    Tyler Lee\textsuperscript{\rm 1},
    Jeffrey Chen\textsuperscript{\rm 1},
    Erez Karpas\textsuperscript{\rm 1}\textsuperscript{\rm 2},
    Muhammad Arrasy Rahman\textsuperscript{\rm 1},
    Peter Stone\textsuperscript{\rm 1},    
    Ufuk Topcu
    \textsuperscript{\rm 1}             
}
\affiliations{
    \textsuperscript{\rm 1}The University of Texas at Austin\\
    \textsuperscript{\rm 2}Technion Israel Institute of Technology\\
    rfernandez@utexas.edu, cprobine@utexas.edu, tylerblee9@gmail.com, jeffrey9@utexas.edu, karpas@cs.utexas.edu, arrasy@cs.utexas.edu, pstone@cs.utexas.edu, utopcu@utexas.edu
}

\begin{document}

\maketitle

\begin{abstract}
In multi-agent environments, coordinating agents to prevent interference and ensure robust individual performance is a critical challenge. Previous research on social laws for multi-agent systems has primarily focused on deterministic, goal-based settings. This paper extends the concept of social laws to stochastic, reward-based environments, proposing a formalism for defining and verifying their robustness under various conditions. We introduce the notion of $\alpha$-robustness, a measure of the guaranteed utility each agent retains while pursuing its optimal single agent policy, assuming all agents obey the social law. We then present an approach for robustness verification of social laws in stochastic settings, based on a reduction to solving a series of Markov decision processes. Empirical evaluations on toy environments illustrate the potential of our framework.
\end{abstract}




\section{Introduction}

An agent operating in a multi-agent environment must reason not just about its own possible actions, but also about the possible actions of the other agents in the environment. Even in deterministic settings, this can make the agent's planning problem exponentially harder in the number of agents \cite{brafman2008one}. In this paper, we consider the specific case in which the agents are neither adversarial nor cooperative, but instead must coordinate among themselves to avoid interfering with one another. 

Coordination in multi-agent systems has been a longstanding area of research across fields such as artificial intelligence, robotics, and game theory. In environments where multiple agents operate simultaneously, each with distinct objectives and capabilities, ensuring robust performance becomes a fundamental challenge. A promising approach to mitigate inter-agent interference and promote harmonious operations is the use of social laws. These laws act as constraints on the actions of agents, limiting behaviors that could disrupt each other's ability to achieve individual goals. Previous work has demonstrated the efficacy of social laws in deterministic, goal-driven environments \citep{Shoham1992, Shoham1995}. These approaches highlighted how social laws provide decentralized coordination, avoiding complex negotiations and ensuring predictable behavior.

Despite the success of social laws in deterministic environments, their application to stochastic multi-agent systems remains largely unexplored. In stochastic settings, actions have probabilistic outcomes, introducing randomness that complicates planning and coordination. Moreover, previous work on social laws in deterministic environments assumed that each agent had a specified goal. In contrast, we aim to support a more expressive reward-driven model in stochastic settings. Additionally, the concurrent nature of stochastic games introduces opportunities for agents to act simultaneously, enabling greater flexibility in modeling real-world multi-agent systems than in previous work on planning-based social laws, which relied on {\em waitfor} preconditions \cite{Karpas2017}.

To address the gap between deterministic, goal-driven models and stochastic, reward-driven models, we propose a new formalism that extends social laws to stochastic multi-agent environments. At the core of this framework is the concept of $\alpha$-robustness, which quantifies how effectively agents retain a guaranteed portion of the utility they could expect to collect in a simplified world (in which they can perform single agent planning),
while operating in the real environment under the assumption that all agents obey the social law.
By reframing social law robustness within the mathematical structure of stochastic games, we provide an innovative tool for designing coordination mechanisms that prevent interference while maximizing agents' utilities, allowing us to discuss notions of fairness and efficiency.


Our contributions are threefold. First, we introduce $\alpha$-robustness, a new metric for assessing the robustness of multi-agent environments that extends traditional robustness definitions to reward-driven stochastic systems. 
Second, we propose a computational technique for robustness verification, 
which relies on optimally solving a series of smaller Markov decision processes (MDPs).
Finally, we 
show how $\alpha$-robustness and the guaranteed utility from the single agent projections—which represent the MDP of an agent acting under the assumption that other agents are performing some default policy (i.e., noop)—provide a lower bound on the utility each agent can guarantee, allowing a designer to make principled decisions about which social law to adopt.

This paper not only bridges the gap between deterministic social laws and stochastic settings but also sets the foundation for 
coordination challenges in uncertain multi-agent domains.


\section{Related Work}
Social laws have long been recognized as an effective mechanism for achieving coordination among autonomous agents in multi-agent systems. The concept of social laws was introduced by \citet{Shoham1992, Shoham1995}, defining them as constraints on agents’ actions that restrict harmful behaviors and ensure individual goals can be achieved without interference in deterministic, goal-driven environments. These pioneering efforts highlighted the benefits of social laws for decentralized coordination, the avoidance of complex agent negotiations, and the promotion of predictable behavior in multi-agent systems. \citet{Tennenholtz1995} extended this idea by exploring the applications of social laws to mobile robot systems, providing a framework for using traffic-like rules to ensure cooperative navigation and prevent conflicts.

Building on these foundations, \citet{Karpas2017} extended social laws to deterministic planning systems, focusing on goal-driven environments in which agents act independently but require coordination to prevent interference. Their framework introduced the concept of social law robustness using MA-STRIPS, a multi-agent planning representation that formalizes rational and adversarial robustness. They proposed an efficient compilation approach to classical planning that allows automated verification of the robustness of social laws under various conditions. In a subsequent extension, \citet{Nir2020} presented a method for verifying social laws in numerical settings, accommodating multi-agent systems in which numerical constraints influence agents’ behavior. Their results demonstrated the feasibility of refining social laws within more complex environments, paving the way for advanced models of social law synthesis.

Recent advances have sought to address the dynamic nature of real-world applications. \citet{Tuysov2020} introduced a framework to verify the robustness of social laws for reactive agents, offering a novel solution to scenarios in which agents are allowed to replan during execution failures. Similarly, \citet{Nir2019} extended social laws to multi-agent coordination in continuous time multi-robot systems. The authors presented methodologies to ensure robust coordination in continuous domains, addressing challenges arising from agent concurrency. These contributions expand the scope of social laws, enabling their application beyond the deterministic assumptions of classical planning environments. However, the notion of concurrency in the temporal planning setting differs from that in the stochastic game formulation in this paper.

Social norms provide a complementary perspective to social laws in multi-agent systems, focusing on the implicit rules or conventions that govern agent behavior. \citet{Agotnes2010} emphasized the role of normative systems, describing them as sets of constraints imposed on agents to ensure orderly and predictable behavior. These norms differ from social laws in that they are less explicit and rely on an assumed level of compliance among agents. The authors presented a formalization of normative systems and their use in preventing conflict and facilitating coordination in complex multi-agent environments. Similarly, \citet{Agotnes2012} expanded on normative frameworks by introducing conservative social norms that impose minimum constraints while ensuring that agents can achieve their individual goals without interference. This work provided insights into the design of systems in which agents voluntarily adhere to established norms to maintain collective stability.

A different notion of robustness was explored in the work on social norms \cite{Wooldridge2009} under the title of robust normative systems. They proposed strategies for modeling and enforcing norms in scenarios where agent compliance could not be guaranteed. They defined k-robustness, which quantifies the ability of normative systems to function effectively even when a certain number of agents fail to comply with the prescribed norms. This notion is particularly relevant to real-world systems with varying levels of agent compliance, such as autonomous vehicles or distributed robotic teams. By integrating norms with decision-making processes, the authors illustrated how a system can preserve its overall functionality despite partial non-compliance. Furthermore, \citet{Agotnes2012} discussed scenarios where social norms yield benefits beyond simple utility maximization, highlighting their importance for maintaining system-wide harmony and safety.

While social norms provide a flexible and adaptive framework, they differ fundamentally from social laws, which impose explicit restrictions on agents' behavior without accounting for noncompliance. Social laws are typically applied in systems that assume deterministic compliance among agents, as demonstrated by \citet{Shoham1992}, \citet{Shoham1995}, and \citet{Karpas2017}. However, both approaches address similar challenges in coordination and robustness, making them highly relevant to the synthesis of rules in multi-agent settings. 


This paper builds upon these foundational contributions by introducing $\alpha$-robustness, a novel notion of robustness of 
social laws in stochastic environments. Unlike prior work that focuses on deterministic or numerical settings, we address the challenges posed by stochastic transitions and reward-driven behavior in uncertain domains. 




Other work has addressed similar challenges in stochastic settings. 
Symbolic verification techniques were used in stochastic games
\cite{Kwiatkowska2022}, but they were limited to turn-based games, while this paper allows for concurrent action.
Robust Markov Games \cite{pmlr-v235-mcmahan24a,10.5555/3692070.3693899} are also similar, but have been used to address reinforcement learning when there is uncertainty about the model itself, while in this paper, we assume the model is known. Finally, other work shows that a new agent can learn an existing social norm \cite{10.5555/3635637.3663011}, further emphasizing the relevance of the problem we address in this paper: social law robustness verification.

\section{Background}

We now provide some background necessary to understand our problem formulation.
Stochastic games \cite{Shapley1953} generalize both Markov decision processes (MDPs) and strategic games, providing a framework for modeling decision-making in multi-agent environments in which action outcomes are probabilistic. A stochastic game is defined as a tuple $\Pi = \langle N, S, \{A^i\}_{i=1}^{|N|}, I, \{R_i\}_{i=1}^{|N|}, P, \gamma \rangle$, where: 
    \begin{itemize}
        \item $N$ is a set of agents
        \item $S$ is the set of possible world states
        \item $A^i$ is the set of action labels available to agent $i \in N$
        \item $I$ is a distribution over the initial state
        \item $R^i : S \times A^1 \times \ldots \times A^{|N|} \rightarrow \mathbb{R}$ is the reward function for agent $i \in N$
        \item $P : S \times A^1 \times \ldots \times A^{|N|} \times S \rightarrow [0,1]$ is the transition function, and
        \item $\gamma$ is the discount factor.
        
    \end{itemize}


In stochastic games, each agent selects an action, and their collective choices determine probabilistic transitions between states according to the function $P$. The reward functions $R^i$ specify the utility each agent gains as a function of the current state and the action taken. The concurrent nature of stochastic games means that all agents act simultaneously, in contrast to sequential decision-making frameworks such as MDPs, where a single agent executes actions in isolation.

Stochastic games are often studied in game theory, where solutions such as Nash equilibria describe stable joint strategies of agents \citep{FilarVrieze1997}. These equilibria ensure that no agent can unilaterally deviate from its strategy to achieve a higher expected reward. In practice, solving stochastic games is challenging due to their inherent complexity, as the interaction among multiple agents creates large state and action spaces. Techniques such as mathematical programming \citep{Daskalakis2009} and reinforcement learning \citep{HuWellman1998} have been developed to address this computational challenge.

While much of the focus in stochastic game research has been on modeling cooperation and competition among agents, the unique challenge of coordination without cooperation has received comparatively less attention. Coordination in non-cooperative stochastic games requires mechanisms to prevent interference while still enabling agents to pursue individual reward-maximizing policies. This paper builds on the foundation of stochastic games and introduces social laws as a mechanism to enforce coordination, ensuring that agents can maintain robust performance without the need for collaboration.

\section{Formal Model}

We begin by describing our formal model for social laws in stochastic environments. 

\begin{definition}[Multi-agent Environment]
    The environment $\Pi$ is a stochastic game, given by $\Pi = \langle N, S, \{A^i\}_{i=1}^{|N|}, I, \{R^i\}_{i=1}^{|N|}, P, \gamma \rangle$.
        

    On top of the standard definition of a stochastic game, we assume each agent has a {\em default policy} $\pi_d^i : S \rightarrow A_i$, representing its default behavior. In many cases, we can assume the default policy consists of applying a $\mbox{noop}$ action at each state, but our formalism applies to any deterministic default policy. 
\end{definition}

\begin{definition}[Single Agent Projection]    
We can now define the {\em single agent projection} for agent $i$, denoted $\Pi_i$ to be $\Pi_i = \langle S, A^i, I, R^i, P|_i, \gamma \rangle$ -- an MDP with
\begin{itemize}
    \item State space $S$
    \item Actions $A^i$
    \item Initial distribution $I$
    \item Reward function $R^i$
    \item Transition function $P|_i(s, a, s')$, induced by all agents except $i$ following their default policy: 
    $P|_i(s, a, s') := P(s, \langle \pi_d^1(s), \ldots, \pi_d^{i-1}(s),  a, \pi_d^{i+1}(s), \ldots, \pi_d^{n}(s) \rangle, s')$,  and
    \item Discount factor $\gamma$.
\end{itemize} 
\end{definition}

The single agent projection $\Pi_i$ is an MDP in which agent $i$ can plan, assuming all other agents execute their default policy. The main benefit of the single agent projection is computational; agents need not consider all possible interactions with other agents. Still, they can plan as if they were alone in the world, assuming other agents act predictably. 
This benefit is evident by the size of the action space (branching factor) of $\Pi_i$ being $|A^i|$, which is exponentially lower than the size of the action space of the original stochastic game: 
$|A^1|\cdot \ldots \cdot |A^{|N|}|$.
We denote the optimal value function of $\Pi_i$ by $V^\ast_{\Pi_i}$.

Of course, planning, assuming you are alone in the world, can yield incredibly bad results when the basic assumption about the behavior of the other agents is wrong. For example, autonomous vehicles could collide if each assumes the others do not move. However, we now define a suitable notion of {\em robustness} for stochastic multi-agent environments, such that when the environment is robust, each agent can plan as if it were alone in the world (i.e., in its single agent projection).  Under the assumption that all agents follow an optimal policy of their own single agent projection, we can provide a lower bound on the utility it will collect in the real environment.
We begin by defining the guaranteed utility.

\begin{definition}[Guaranteed Utility]
    The {\em guaranteed utility} for agent $i$, $E_i$, is the expected utility (discounted total sum of rewards) of an optimal policy in the single agent projection $\Pi_i$ from the initial distribution $I$, that is,
    $E_i := \E_{s_0 \sim I} V^\ast_{\Pi_i}(s_0)$.
\end{definition}

This value is the utility agent $i$ can achieve by itself, assuming the other agents do not ``interfere'' and do nothing (follow the default policy). Note that this is not really guaranteed in the strict sense of agent $i$ being able to achieve this value regardless of the actions of the other agents. Rather, it is the benchmark against which we measure the robustness of the multi-agent environment, as described next.



\begin{definition}[$\alpha$-robustness]
\label{d:robustness}
A multi-agent environment $\Pi$ is {\em $\alpha$-robust for agent $i$} (for $\alpha \in (0,1]$) iff the utility for agent $i$ from the joint policy $\pi^\ast_1 \times \ldots \times \pi^\ast_n$ is at least $\alpha \cdot E_i$, assuming $\pi^\ast_j$ is an optimal policy of the single agent projection $\Pi_j$, for every agent $j$. 
We say $\Pi$ is {\em $\alpha$-robust} if it is $\alpha$-robust for every agent.
\end{definition}

We remark that if the reward can be positive or negative, $\alpha$-robustness may be negative. Thus, we assume that the rewards in the multi-agent environment are all positive (although a similar notion of $\alpha$-robustness could be defined for all negative rewards, i.e., costs).

Coming back to our discussion about planning in the single agent projection, having an $\alpha$-robust multi-agent environment guarantees that each agent can optimally solve its single agent projection (which, as mentioned above, is much easier than optimally solving the full stochastic game), and can be sure that the utility it will collect in the real environment is at least an $\alpha$ fraction of its guaranteed utility, $E_i$. 
The proof of the following corollary is immediate from Definition \ref{d:robustness}:
\begin{corollary}
\label{c:guarantee}
Let $\Pi$ be a multi-agent environment, let $i$ be an agent, and let $E_i$ be the guaranteed utility of $i$ in $\Pi$. Assume $\Pi$ is $\alpha$-robust for agent $i$, and assume every agent $j$ follows an optimal policy of its single agent projection $\Pi_j$.
Then agent $i$ will collect a utility of at least $\alpha \cdot E_i$.
\end{corollary}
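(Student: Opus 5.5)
The corollary follows by unfolding Definition~\ref{d:robustness}, so the plan is to match its hypotheses to the definition and read off the conclusion. We fix the environment $\Pi$, the agent $i$, and the guaranteed utility $E_i = \E_{s_0 \sim I} V^\ast_{\Pi_i}(s_0)$ as in the statement. For every agent $j$, let $\pi^\ast_j$ be the optimal policy of its single agent projection $\Pi_j$ that agent $j$ actually follows. The agents then jointly execute $\pi^\ast_1 \times \ldots \times \pi^\ast_n$ in $\Pi$. The utility agent $i$ collects is the expected discounted sum of $R^i$ under the distribution over trajectories induced by $I$, $P$, and this joint policy.

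Next we invoke the robustness hypothesis. By Definition~\ref{d:robustness}, $\Pi$ being $\alpha$-robust for $i$ means that agent $i$'s utility under the joint policy $\pi^\ast_1 \times \ldots \times \pi^\ast_n$ is at least $\alpha \cdot E_i$, whenever each $\pi^\ast_j$ is optimal for $\Pi_j$. This is exactly our situation, so the bound follows at once.

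The only delicate point is quantification. An MDP $\Pi_j$ can have several optimal policies, and Definition~\ref{d:robustness} does not state explicitly whether the bound must hold for every choice of optimal policies or only for some choice. We read ``assuming $\pi^\ast_j$ is an optimal policy'' as universal: the inequality must hold for every tuple of optimal policies. Under that reading, the specific tuple the agents happen to follow is covered, and the corollary is immediate. If the definition were read existentially, the corollary would require the agents to follow the particular witnessing tuple. So we state the universal reading explicitly, and this is the interpretation under which the paper calls the proof ``immediate.''
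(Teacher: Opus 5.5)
Your proposal is correct and matches the paper, which states that the corollary is immediate from Definition~\ref{d:robustness}. Your universal reading of the quantifier over optimal policies is also the one the paper relies on later: the worst-case MDP $\mathcal{M}_i$ in Theorem~\ref{thm:transf_corr} minimizes over all tuples of optimal single-agent policies.
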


In words, 
if $\Pi$ is $\alpha$-robust, then every agent $i$ can guarantee achieving a utility of at least $\alpha \cdot E_i$ by following an optimal policy for its single agent projection, assuming all other agents also do the same. 
In a sense, the loss of a factor of $1-\alpha$ is the price we pay for solving the much easier single agent planning problem.
Having defined the notion of $\alpha$-robustness, we can define several computational problems:
\begin{itemize}
    \item Given a multi-agent environment $\Pi$ and $\alpha$, is $\Pi$ $\alpha$-robust?
    \item Given a multi-agent environment $\Pi$, find the highest $\alpha$ for which it is $\alpha$-robust.  
\end{itemize}

\subsection{Social Laws}

As in previous work on social laws, we can define a social law $l$ as a transformation of $\Pi$. Applying $l$ to $\Pi$ yields a new stochastic game $\Pi^l$, which represents the same multi-agent environment -- except with restrictions placed on the agents. Denote the robustness level of $\Pi^l$ by $\alpha_l$, and the guaranteed utility of agent $i$ in $\Pi^l$ by $E_i^l$.
From Corollary \ref{c:guarantee}, the utility agent $i$ can guarantee it will achieve is $\alpha_l \cdot E_i^l$, under the assumption that all agents follow an optimal policy of the single agent projection and respect the social law.

Let us now consider an alternative in which there is no social law. In this case, agent $i$ can only guarantee anything by assuming the worst-case behavior of the other agents. Thus, agent $i$ must solve a min-max optimization problem involving the actions of all agents, making this computationally expensive. Even ignoring the computational cost of this approach, the following example demonstrates that, in some cases, social laws are required to guarantee anything beyond the worst possible outcome.

\paragraph{Example 1:} Consider a simple 2x2 gridworld, illustrated in Figure \ref{f:example}, in which the blue agent starts at the northwest, and collects a reward of +1 at the southeast, while the red agent starts at the southeast and collects a reward of +1 at the northwest. Each agent can move north, south, east, or west. If both agents end up at the same location, they enter a dead-end collision state, yielding a reward of 0. All other states also have a reward of 0. For the sake of brevity, assume all actions are deterministic.

Without a social law, there is no way to guarantee avoiding collisions, and thus, the min-max value each agent can guarantee is 0. On the other hand, by introducing a social law that allows agents to move only clockwise, we can ensure that each agent reaches its destination and receives a reward of +1.

\begin{figure}
    \centering
\begin{tikzpicture}
    \draw[step=1cm, gray, thin] (0,0) grid (2,2);
    
    \fill[blue] (0.5, 1.5) circle (0.3cm);
    
    \fill[red] (1.5, 0.5) circle (0.3cm);
\end{tikzpicture}
    \caption{Illustration of Example Gridworld Environment}
    \label{f:example}
\end{figure}

\subsection{Comparing Social Laws}

When choosing which social law to institute, a designer must sometimes weigh competing social laws. For example, in the gridworld from Example 1, one could also only allow counter-clockwise motions. In this case, both social laws yield the same utility, but this is not always the case.

Given the same multi-agent environment $\Pi$, we could also compare the utility of two different social laws, $l_1$ and $l_2$, using Corollary \ref{c:guarantee}. Assuming $\Pi^{l_1}$ is $\alpha_1$-robust and $\Pi^{l_2}$ is $\alpha_2$-robust, and denoting the guaranteed utility for agent $i$ in $\Pi^{l_1}$ (respectively, $\Pi^{l_2}$) by $E_i^1$ (respectively, $E_i^2$), then agent $i$ can guarantee achieving $\alpha_1 \cdot E_i^1$ under $l_1$ and $\alpha_2 \cdot E_i^2$ under $l_2$. 

The following example illustrates the tradeoff between the robustness level and the guaranteed utility. Consider a very restrictive social law (for example, autonomous vehicles are allowed to drive at a top speed of 10 miles per hour), which would have a high level of robustness ($\alpha$), but a low guaranteed utility ($E_i$). This social law can be contrasted with a less restrictive social law, which might be less robust but has a higher guaranteed utility.

Of course, each agent $i$ has a different guaranteed utility $E_i$ (and possibly a different robustness level), and would prefer to adopt a social law which maximizes its own $\alpha \cdot E_i$. Of course, it is up to the mechanism designer (or legislator) to choose a social law that is best for society overall, balancing social welfare, fairness, incentive compatibility, and other considerations. 
%
We leave such questions for future work and focus on the computational challenges of robustness verification. 







\section{Robustness Verification via Worst Case MDPs}

Having defined some computational questions about robustness, we now describe a technique to answer them.
Our technique is based on reducing the verification problem to the problem of solving MDPs.
In particular, we show that we can reduce the problem of computing the largest $\alpha$ for which a system is $\alpha$-robust to the problem of solving $|N|$ 'worst-case' MDPs that we construct using the original system and the optimal solutions to the projected MDPs.


Our method works in two stages. First, we solve the single agent projections optimally to obtain a compact characterization of each agent's optimal policies via the Bellman equation. Second, we construct for each agent a {\em worst-case} MDP, in which all agents are constrained to follow optimal policies of their single agent projections (obtained from the first step). An optimal solution of the worst-case MDP for agent $i$ corresponds to the worst combination of allowed policies for agent $i$. By taking the worst of the worst cases, we obtain the highest value of $\alpha$ for which the system is $\alpha$-robust. We now explain this method in more detail.

First, recall that the single agent projection for agent $i$, $\Pi_i$, is an MDP. Thus, its value function $\optValue{i}{s}$ is the solution to the following equation \citep{DBLP:books/wi/Puterman94}.
\begin{equation}
    \optValue{i}{s} = \max_{a \in A^i} R^i(s,a) + \gamma \sum_{s' \in S} P|_i(s,a,s')\optValue{i}{s'}.
\end{equation}
We use $\optAct{i}{s}$ to denote agent $i$'s optimal actions at state $s$, that is, the actions which attain the maximum value. 
\begin{equation}
    \optAct{i}{s} = \arg\max_{a \in A^i} R^i(s,a) + \gamma \sum_{s' \in S} P|_i(s,a,s')\optValue{i}{s'}.
\end{equation}

Because of the Markovian structure of the MDP, a stochastic policy $\pi$ is optimal in $\Pi_i$ if, for all states $s$, the policy only takes actions from $\optAct{i}{s}$ \citep{DBLP:books/wi/Puterman94}.
That is, $\sum_{a\in \optAct{i}{s}} \pi(s,a) = 1$ for all $s$ in $S$.

We make the following assumption to ensure optimal policies for agent $i$ necessarily take actions in $\optAct{i}{s}$ for every state $s$. We add this assumption to avoid pathological cases where agents use non-optimal actions at states that occur with zero probability under the optimal policy. We remark that this assumption is akin to a subgame-perfection condition \cite{fudenberg1983subgame}, under which agents' policies are optimal for any initial history.

\begin{assumption}
    \label{assump:qsa_reg}
    Agents use optimal single agent policies that take optimal actions at every state,
    that is, $\sum_{a\in \optAct{i}{s}} \pi^i(s,a) = 1$ for all $s$ in $S$.
\end{assumption}

Under Assumption~\ref{assump:qsa_reg}, a policy $\pi$ in MDP $\Pi_i$ is optimal if and only if, for all states $s$, the policy takes actions in $\optAct{i}{s}$. Thus, by optimally solving all the single agent projections, we can restrict the set of actions that we consider for agent $i$ in state $s$ to $\optAct{i}{s}$. Due to the Markovian nature of MDPs, the choice of action in state $s$ is independent of the choice of action in state $s'$ for any pair of states $s, s'$. Thus, $\optAct{i}{s}$ is a compact characterization of all optimal policies for agent $i$ in its single agent projection.

In the second step of this approach, we define a set of \emph{worst-case MDPs} to find the robustness value of $\Pi$. The worst-case MDP for agent $i$, denoted $\mathcal{M}_i$, does two things: (i) it restricts all agents to only choose optimal actions, based on the first step described above, and (ii) it assumes all agents act together, under centralized control, to find the worst possible set of joint policies for agent $i$. 

Formally, we define the worst-case MDP 
$\mathcal{M}_i = \langle S, \tilde{A}, I, -R^i, P, \gamma \rangle$ where $\tilde{A}$ is a state-dependent
action set.
The set of actions $\tilde{A}(s)$ available at state $s$ is
\begin{equation}
    \tilde{A}(s) = \times_{j=1}^{|N|} \optAct{j}{s} .
\end{equation}
Intuitively, a deterministic optimal policy in the worst-case MDP $\mathcal{M}_i$ prescribes a plan for the agents to achieve the worst possible outcome for agent $i$, which corresponds to maximally violating $\alpha$-robustness for agent $i$.
Furthermore, under this policy, agents take only actions that are optimal in their individual MDPs, and thus we can decompose it into individual plans for each agent, each optimal in its respective MDP.

\begin{theorem}
    \label{thm:transf_corr}
    Let $F_i = -\mathbb{E}_{s_0 \sim I}[V^*_{\mathcal{M}_i}(s_0)]$.
    Then $\alpha^* = \min_i F_i/E_i$ is the highest level of robustness achievable.
\end{theorem}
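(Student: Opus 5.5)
The plan is to show that $F_i$ equals the worst utility agent $i$ can receive under any admissible joint policy. An admissible joint policy is a product $\pi^1\times\cdots\times\pi^{|N|}$ in which each $\pi^j$ is optimal for $\Pi_j$ and satisfies Assumption~\ref{assump:qsa_reg}. Once this identification is in hand, Definition~\ref{d:robustness} gives that $\Pi$ is $\alpha$-robust for agent $i$ if and only if $\alpha E_i \le F_i$. Since rewards are positive, $E_i>0$, so this is equivalent to $\alpha \le F_i/E_i$. Taking the minimum over agents then shows that $\alpha^*=\min_i F_i/E_i$ is the largest $\alpha$ for which $\Pi$ is $\alpha$-robust.

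\textbf{Step 1 (admissible joint policies are policies of $\mathcal{M}_i$).} Take an admissible joint policy. By Assumption~\ref{assump:qsa_reg}, at every state $s$ each $\pi^j(s,\cdot)$ is supported on $\optAct{j}{s}$. The product distribution is therefore supported on $\tilde{A}(s)=\times_j \optAct{j}{s}$, so it is a legitimate stationary (stochastic) policy of $\mathcal{M}_i$. Its transitions in $\mathcal{M}_i$ are given by $P$ on joint actions, which are exactly the dynamics of $\Pi$. Its value in $\mathcal{M}_i$ is therefore minus agent $i$'s utility in $\Pi$. Because $V^*_{\mathcal{M}_i}$ dominates the value of every policy, agent $i$'s utility is at least $-V^*_{\mathcal{M}_i}$ pointwise. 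Taking the expectation over $I$, every admissible joint policy gives agent $i$ utility at least $F_i$.

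\textbf{Step 2 (the bound $F_i$ is attained).} Since $\mathcal{M}_i$ has finite state-dependent action sets, standard MDP theory \citep{DBLP:books/wi/Puterman94} provides a deterministic stationary optimal policy $\sigma$ with $\sigma(s)\in\tilde{A}(s)$. Write $\sigma(s)=(a^1(s),\dots,a^{|N|}(s))$ and set $\pi^j(s)=a^j(s)$. Each $\pi^j$ selects an action of $\optAct{j}{s}$ at every state. By the characterization recalled before Assumption~\ref{assump:qsa_reg}, each $\pi^j$ is then optimal in $\Pi_j$ and satisfies the assumption. The product of these deterministic policies is exactly $\sigma$. Hence agent $i$'s utility under this admissible joint policy equals $F_i$, so $F_i$ is the minimum over admissible joint policies.

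\textbf{Step 3 (conclude).} Suppose $\alpha\le F_i/E_i$ for all $i$. Then by Step 1, every admissible joint policy gives each agent $i$ at least $\alpha E_i$, so $\Pi$ is $\alpha$-robust. Conversely, suppose $\alpha > F_j/E_j$ for some agent $j$. The joint policy of Step 2 built from $\mathcal{M}_j$ gives agent $j$ exactly $F_j<\alpha E_j$, so $\Pi$ is not $\alpha$-robust.

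The main obstacle is Step 2: the worst case of $\mathcal{M}_i$ must be realized by a \emph{product} of individual policies. This is exactly why I insist on a deterministic optimal policy of $\mathcal{M}_i$. A stochastic optimum might correlate the agents' choices and so fail to factor into independent policies, whereas a deterministic one factors trivially. A second point needing care is the interpretation of Definition~\ref{d:robustness}. It must be read as requiring the bound for \emph{every} choice of single-agent optimal policies $\pi^\ast_j$, restricted by Assumption~\ref{assump:qsa_reg}, and I will state this reading explicitly.
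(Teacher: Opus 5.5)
Your proposal is correct and follows the same argument as the paper. Every product of single-agent optimal policies is a feasible policy of $\mathcal{M}_i$, giving agent $i$ utility at least $F_i$, while a deterministic optimal policy of the worst-case MDP factors into individually optimal deterministic policies that certify no larger $\alpha$ is achievable; your write-up additionally makes explicit what the paper leaves implicit, namely why determinism is needed for this factorization and why the factored policies are optimal in each $\Pi_j$ and satisfy Assumption~\ref{assump:qsa_reg}.
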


\textit{Proof:} 
Without loss of generality, suppose $\min_i F_i/E_i = F_1/E_1$.
We can not achieve a higher robustness than $\alpha^*$, as the deterministic optimal policy in worst-case MDP $\mathcal{M}_1$ 
provides a certificate that other agents can violate robustness even when following deterministic optimal policies.
However, $\alpha^*$ is achievable as a robustness level. Indeed, for any agent $i$, the product of any set of optimal policies 
$\{\pi_j^*\}_{j=1}^{|N|}$ is a valid policy in $\mathcal{M}_i$, as agent $j$'s policy must use actions in $\optAct{j}{s}$.
Under this policy, agent $i$ receives an expected reward of at least $F_i$, which is greater than $\alpha^* E_i$. $\square$

Theorem~\ref{thm:transf_corr} implicitly defines a simple algorithm for computing the robustness of a multi-agent system, which we detail in Algorithm~\ref{alg:alp_rob_dec}.

\begin{algorithm}[H]
\caption{Computing robustness of a multi-agent system}
\label{alg:alp_rob_dec}
\begin{algorithmic}[1]

\FOR{$i = 1$ to $|N|$}
    \STATE Compute value function $\optValue{i}{s}$
    \STATE Compute action set $\optAct{i}{s}$ for all states $s$
    \STATE Compute individual MDP value $E_i$
\ENDFOR

\STATE $\alpha^* \gets \infty$

\FOR{$i = 1$ to $|N|$}
    \STATE Construct worst-case MDP $\mathcal{M}_i$
    \STATE Solve worst-case MDP to obtain $F_i$
    \STATE $\alpha^* \gets \min(\alpha^*, F_i / E_i)$
\ENDFOR

\STATE \textbf{return} $\alpha^*$

\end{algorithmic}
\end{algorithm}

\subsubsection{Handling approximate value functions}

For certain approximate optimal value functions, we can provide lower bounds on the optimal robustness level $\alpha^*$.
In particular,
let $\underValueVec{i}$ and $\overValueVec{i}$ be lower and upper bounds on the optimal value for agent $i$, 
such that
\begin{equation}
    \forall i,s:\underValue{i}{s} \leq \optValue{i}{s} \leq \overValue{i}{s} .
\end{equation}
We can use these upper and lower bounds to compute a set $\optActOuter{i}{s}$ which overapproximates $\optAct{i}{s}$ as follows.
\begin{multline}
    \optActOuter{i}{s} = \{a \in A^i(s) | \\ R^i(s,a) + \gamma \sum_{s' \in S} P|_i(s,a,s')\overValue{i}{s'} \geq \underValue{i}{s} \}.
\end{multline}
If we compute the worst-case MDP $\overline{\mathcal{M}}_i$ using sets $\optActOuter{i}{s}$ to define $\tilde{A}(s)$, we can compute a lower bound $\underline{F}_i$ on $F_i$. 
We can also compute an upper bound $\overline{E}_i$ on $E_i$ using $\overValueVec{i}$. 
The value $\min_i{\underline{F}_i/\overline{E}_i}$ lower bounds $\alpha^*$.

Specifically, we can use the above approximation method to handle value functions which are close to $\optValueVec{i}$ in the $\infty$-norm, such as would be the case after running value iteration~\cite{Bellman:DynamicProgramming} for a finite number of steps.
Let $(\approxValueVec{i})_{i=1}^N$ be approximate value functions for the projected MDPs, 
that are $\tau$-close to $\optValueVec{i}$ in the $\infty$-norm.
That is, $|\approxValueVec{i}(s) - \optValue{i}{s} | \leq \tau$ for all $s \in S, i \in N$.
We can compute the above upper and lower bounds on $\optValueVec{i}$ by setting
\begin{equation}
    \overValueVec{i} = \approxValueVec{i} + \tau \mathbf{1} \text{ and } \underValueVec{i} = \approxValueVec{i} - \tau \mathbf{1}.
\end{equation}
In practice, this approximation is what we use with value iteration, since there may be minor numerical differences due to floating-point rounding and due to stopping after a finite number of iterations.




\section{Empirical Evaluation}

\begin{figure*}

\begin{tabular}{lp{0.3\linewidth}p{0.3\linewidth}p{0.3\linewidth}}
Env & $\alpha$ for Simple Grid & $\alpha$ for Grid w/ Velocity & Guarantee for Grid w/ Velocity\\
 \rotatebox{90}{~~~~~~~~~~~~~Parallel Lanes} &
\begin{tikzpicture}
\begin{axis}[
  xlabel=Length,
  ylabel=$\alpha$,
  legend pos=south west,
  height=4cm, width=\linewidth]
\addplot table [y=0.6, x=length]{data/aima/alpha_values_Parallel.dat};
\addplot table [y=0.8, x=length]{data/aima/alpha_values_Parallel.dat};
\addplot table [y=1.0, x=length]{data/aima/alpha_values_Parallel.dat};
\end{axis}
\end{tikzpicture}
&
\begin{tikzpicture}
\begin{axis}[
  xlabel=Length,
  ylabel=$\alpha$,
  legend pos=south west,
  height=4cm, width=\linewidth]
\addplot table [y=0.6, x=length]{data/alpha_values_Parallel_Speed.dat};
\addplot table [y=0.8, x=length]{data/alpha_values_Parallel_Speed.dat};
\addplot table [y=1.0, x=length]{data/alpha_values_Parallel_Speed.dat};
\end{axis}
\end{tikzpicture}
&
\begin{tikzpicture}
\begin{axis}[
  xlabel=Length,
  ylabel=Cost,
  legend pos=north west,
  height=4cm, width=\linewidth]
\addplot table [y=0.6, x=length]{data/w0_values_Parallel_Speed.dat};
\addplot table [y=0.8, x=length]{data/w0_values_Parallel_Speed.dat};
\addplot table [y=1.0, x=length]{data/w0_values_Parallel_Speed.dat};
\addplot table [y=1.0, x=length]{data/w0_values_Parallel_Speed_Slow.dat};
\end{axis}
\end{tikzpicture}\\
\rotatebox{90}{~~~~~~~~~~~~~Opposite Lanes} &
\begin{tikzpicture}
\begin{axis}[
  xlabel=Length,
  ylabel=$\alpha$,
  legend pos=outer north east,
  height=4cm, width=\linewidth]
\addplot table [y=0.6, x=length]{data/alpha_values_Opposite.dat};
\addplot table [y=0.8, x=length]{data/alpha_values_Opposite.dat};
\addplot table [y=1.0, x=length]{data/alpha_values_Opposite.dat};
\end{axis}
\end{tikzpicture}
&
\begin{tikzpicture}
\begin{axis}[
  xlabel=Length,
  ylabel=$\alpha$,
  legend pos=south west,
  height=4cm, width=\linewidth]
\addplot table [y=0.6, x=length]{data/alpha_values_Opposite_Speed.dat};
\addplot table [y=0.8, x=length]{data/alpha_values_Opposite_Speed.dat};
\addplot table [y=1.0, x=length]{data/alpha_values_Opposite_Speed.dat};
\end{axis}
\end{tikzpicture}
&
\begin{tikzpicture}
\begin{axis}[
  xlabel=Length,
  ylabel=Cost,
  legend pos=north east,
  height=4cm, width=\linewidth]
\addplot table [y=0.6, x=length]{data/w0_values_Opposite_Speed.dat};
\addplot table [y=0.8, x=length]{data/w0_values_Opposite_Speed.dat};
\addplot table [y=1.0, x=length]{data/w0_values_Opposite_Speed.dat};
\addplot table [y=1.0, x=length]{data/w0_values_Opposite_Speed_Slow.dat};
\end{axis}
\end{tikzpicture}\\
\rotatebox{90}{~~~~~~~~~~~~~Switch} &
\begin{tikzpicture}
\begin{axis}[
  xlabel=Length,
  ylabel=$\alpha$,
  legend pos=north west,
  height=4cm, width=\linewidth, ymin=0,ymax=1]
\addplot table [y=0.6, x=length]{data/alpha_values_CW.dat};
\addplot table [y=0.8, x=length]{data/alpha_values_CW.dat};
\addplot table [y=1.0, x=length]{data/alpha_values_CW.dat};
\end{axis}
\end{tikzpicture}
&
\begin{tikzpicture}
\begin{axis}[
  xlabel=Length,
  ylabel=$\alpha$,
  legend pos=north west,
  height=4cm, width=\linewidth, ymin=0,ymax=1]
\addplot table [y=0.6, x=length]{data/alpha_values_CW_Speed.dat};
\addplot table [y=0.8, x=length]{data/alpha_values_CW_Speed.dat};
\addplot table [y=1.0, x=length]{data/alpha_values_CW_Speed.dat};
\end{axis}
\end{tikzpicture}
&
\begin{tikzpicture}
\begin{axis}[
  xlabel=Length,
  ylabel=Cost,
  legend pos=north east,
  height=4cm, width=\linewidth]
\addplot table [y=0.6, x=length]{data/w0_values_CW_Speed.dat};
\addplot table [y=0.8, x=length]{data/w0_values_CW_Speed.dat};
\addplot table [y=1.0, x=length]{data/w0_values_CW_Speed.dat};
\addplot table [y=1.0, x=length]{data/w0_values_CW_Speed_Slow.dat};
\end{axis}
\end{tikzpicture}\\
\rotatebox{90}{~~~~~~~~~~~~~Switch (CW SL)} &
\begin{tikzpicture}
\begin{axis}[
  xlabel=Length,
  ylabel=$\alpha$,
  legend pos=south east,
  height=4cm, width=\linewidth]
\addplot table [y=0.6, x=length]{data/alpha_values_CWSL.dat};
\addplot table [y=0.8, x=length]{data/alpha_values_CWSL.dat};
\addplot table [y=1.0, x=length]{data/alpha_values_CWSL.dat};
\end{axis}
\end{tikzpicture}
&
\begin{tikzpicture}
\begin{axis}[
  xlabel=Length,
  ylabel=$\alpha$,
  legend pos=south east,
  height=4cm, width=\linewidth]
\addplot table [y=0.6, x=length]{data/alpha_values_CWSL_Speed.dat};
\addplot table [y=0.8, x=length]{data/alpha_values_CWSL_Speed.dat};
\addplot table [y=1.0, x=length]{data/alpha_values_CWSL_Speed.dat};
\end{axis}
\end{tikzpicture}
&
\begin{tikzpicture}
\begin{axis}[
  xlabel=Length,
  ylabel=Cost,
  height=4cm, width=\linewidth]
\addplot table [y=0.6, x=length]{data/w0_values_CWSL_Speed.dat};
\addplot table [y=0.8, x=length]{data/w0_values_CWSL_Speed.dat};
\addplot table [y=1.0, x=length]{data/w0_values_CWSL_Speed.dat};
\addplot table [y=1.0, x=length]{data/w0_values_CWSL_Speed_Slow.dat};
\end{axis}
\end{tikzpicture}\\
& \multicolumn{3}{c}{
\begin{tikzpicture}
\begin{axis}[
  legend style={at={(0,1.0)}, anchor=north,legend columns=-1,/tikz/every even column/.append style={column sep=0.5cm}},
  axis lines = none,
  height=2cm, width=2cm]
\addplot{0};
\addlegendentry{$p_{\mbox{success}}=0.6$}
\addplot{0};
\addlegendentry{$p_{\mbox{success}}=0.8$}
\addplot{0};
\addlegendentry{$p_{\mbox{success}}=1.0$}
\addplot{0};
\addlegendentry{slow}
\end{axis}
\end{tikzpicture}
}
\\
\end{tabular}

    \caption{\label{f:lanes} Results for Different Grid Environments for Different Length and $p_{\mbox{success}}$ Values.\\ 
    The left column shows the robustness values for the simple grid environment.
    The middle column shows the robustness values for the grid environment with velocities.
    The right column shows the guaranteed utility (cost) for the grid environment with velocities, without social law, and under the social law that forces all agents to go slowly.
    }
\end{figure*}

To check the performance of our proposed approach, we performed an empirical evaluation.
We implemented our approach in Python, relying on the basic MDP implementation from \cite{DBLP:books/aw/RN2020}. We used value iteration to solve the MDPs on several benchmarks.

\subsection{Simple Grid}

\paragraph{Description:}First, we used a simple grid in which 2 agents move, which is a generalization of the example in Figure \ref{f:example}. Each agent starts in an $\textit{outside}$ state, and has a ``drop in'' action which places the agent in a designated start position.
This ``drop in'' allows the grid in the single agent projection to be completely clear in the initial state, allowing the agent to choose any path it wants. 
Each agent can move east/west/north/south, with probability $p_{\mbox{success}}$ of reaching the desired position. Otherwise, the agent will end up in one of the other adjacent cells. Finally, upon reaching some designated goal location, the agent can execute a ``leave'' action, which moves it to the absorbing state $\textit{done}$ in which it collects a reward of 0. 
Additionally, if two agents end up in the same location, they experience a collision, which is an absorbing state with a reward of -100. All other states have a reward of -1. We used a discount factor $\gamma=0.99$.

In this generic grid problem, we developed several specific benchmarks, all of them on grids of size 2$\times n$, where we can vary $n$:

\begin{description}
    \item[Parallel Lanes:]  One agent is trying to get from the northwest to the northeast, and the other from southwest to southeast.  
    \item[Opposite Lanes:] One agent is trying to get from the northwest to the northeast, and the other from southeast to southwest.
    \item[Switch Corners:] One agent is trying to get from northwest to southeast, and the other from southeast to northwest.
    \item[Switch Corners (CW SL):] This is the same switch corners scenario, but with a social law (SL) which forces agents to move only clockwise (CW), and move north/south at the latest possible point (that is, the first agent can only go south at the eastern edge of the grid, and the second agent can only go north at the western edge of the grid).
\end{description}

\paragraph{Experiment Results:} 
The leftmost column of Figure \ref{f:lanes} shows the results for a simple grid, for various lengths $n$, and various values $p_{\mbox{success}}$, for the four scenarios above.
In the parallel-lane and opposite-lane environments, under the $p_{\mbox{success}} = 1$ regime, each agent has a single optimal policy to drive in its lane. Thus, as evident in the results, we get a robustness level of $\alpha=1$. 
In the Switch Corners environment, each agent has many possible optimal policies (since it must go north or south at some point, and all choices are equally valid). Thus, we get low robustness values.
However, for the Switch Corner with the social law, we regain $\alpha=1$ robustness when $p_{\mbox{success}} = 1$.
Of course, in all scenarios when $p_{\mbox{success}} < 1$, it becomes impossible to guarantee perfect robustness, as is evident in the results.

\subsection{Grid With Velocity}
\paragraph{Description:} We define another benchmark by modifying the simple grid benchmark. We extend the action space of the simple grid benchmark by allowing each agent to move either fast or slow, yielding eight actions that let them move in distinct cardinal directions at different velocities. When moving slowly, agents always successfully move to the adjacent grid cell in the selected cardinal direction. In contrast, agents may slip when deciding to move quickly, ending up in an adjacent grid that is not the intended destination of the chosen action, with probability $1 - p_{\mbox{success}}$. The remaining actions and the distribution of next states from selecting them are the same as the simple grid environment.

In designing this new benchmark, we also modify the reward function of the simple grid benchmark. Agents are specifically given a reward of -1 whenever they choose to move fast. Meanwhile, reaching any state without selecting any actions to move quickly yields a reward of -2. The only exceptions are when agents choose the "leave" action at the goal location, which yields a reward of 0, and when they collide, which yields a reward of -100. Finally, we also use a discount factor of $\gamma = 0.99$ in our experiments with this benchmark.

\paragraph{Experiment Results:}

The middle column of Figure \ref{f:lanes} shows the robustness values of this benchmark, for the same four scenarios as in simple grid, varying the length of the grid $n$ as well as $p_{\mbox{success}}$ (which now only applies to moving fast).
Unsurprisingly, the robustness values look similar to the simple grid, specifically with $\alpha=1$ for parallel, opposite, and switch corners with the CW social law when $p_{\mbox{success}}=1$. 

More interestingly, this benchmark allows us to examine another social law: the requirement that agents always move slowly. In terms of robustness, this is equivalent to moving to the $p_{\mbox{success}}=1$ regime, so we get a robustness value of 1 for parallel, opposite, and switch corners with CW social law. However, we can now explore the impact of adopting this social law on the guaranteed utility each agent can achieve, $\alpha \cdot E_i$. Since the agents are symmetric here, we consider only one of them.
The right column of Figure \ref{f:lanes} shows the guaranteed utility (in terms of cost --- lower is better) under the {\em slow} social law, for the four scenarios under grid with velocity. Note that for the bottom right plot, we apply both the  CW social law and the slow social law --- so agents are only allowed to move at the slow speed in a clockwise direction.


As these results show, going slow is not always optimal. In all scenarios, when $p_{\mbox{success}}=1.0$, going slow is a useless restriction, as it does not improve robustness. Thus, we do not expect to do better than the $p_{\mbox{success}}=1.0$, but are interested in comparing going slow to not going slow when  $p_{\mbox{success}}=0.6$ and $p_{\mbox{success}}=0.8$.
For parallel lanes, going slow yields the exact same cost with $p_{\mbox{success}}=0.6$ and $p_{\mbox{success}}=0.8$, and the lines align.
On the other hand, for the opposite-lane case, going slow yields higher utility than not going slow when $p_{\mbox{success}}=0.6$ and $p_{\mbox{success}}=0.8$.  
For switching corners under the CW social law, going slowly is better for small grids but worse for larger grids. Finally, for switching corners without the social law, going slowly does not improve robustness and thus unnecessarily increases the cost.

As these results show, and as is consistent with our theoretical results, not every social law that guarantees robustness is always a good idea. Specifically for the grid, it might be worthwhile to consider more complex social laws, such as slowing down when another agent is nearby.



\section{Conclusion and Future Work}

We have presented a new framework for multi-agent coordination in stochastic environments based on social laws. We defined the notion of $\alpha$-robustness, and showed how it can be used with the guaranteed utility to provide guarantees on what each agent can achieve in a multi-agent setting, while planning only for itself. We have also showed examples where using social laws increases the utility each agent can guarantee, and other examples where this is not the case.

Unlike previous work on planning-based social laws in deterministic settings
\cite{Karpas2017}, where the "waitfor" construct was necessary to prevent possible action failures (at the price of introducing potential deadlocks), using stochastic games as the underlying formalism makes this construct irrelevant.

This is because agents execute actions concurrently without waiting for specific states to be achieved beforehand. In fact, in our setting, waitfors can be implemented as a policy which executes a noop action in some states, but there is no need to handle waitfors via a special mechanism. This eliminates some of the rigidity introduced by deterministic environments and creates new opportunities for designing coordination mechanisms.

The social law verification problem we study requires knowledge of all agents' optimal value functions, or at least an approximately optimal value function for each agent. It is a simple exercise to extend these definitions to allow each agent to have a suboptimality level $\beta$. The actions we consider for the worst-case MDP would be those that are $\beta$-suboptimal or better.

Another interesting direction for future research is to use reinforcement learning (RL) to solve the MDPs at hand. While RL does not guarantee optimality, we have theoretical results about approximate value functions. Another challenge is that RL typically finds {\em one} optimal policy, while our technique requires the optimal value function. One option would be to use 1-step value updates to compute the optimal value function based on an approximately optimal policy from RL. Another option is to use RL techniques with a critic, and use the critic for the value function.

Finally, while we focused on robustness verification in this paper, we would like to also be able to synthesize robust social laws automatically. In deterministic settings, this has been done via searching through the sets of disallowed actions \cite{DBLP:conf/aaai/NirSK20}, and we intend to follow a similar approach for stochastic settings. In deterministic settings, it was possible to prune this search space by considering only actions which appeared in a counter-example to robustness. While this is also possible in stochastic settings, we will need to consider all actions that appear in the optimal policies that solve the worst-case MDPs, even those that are rarely executed. Thus, we intend to develop stronger pruning rules for the search for robust social laws.




\section{Acknowledgments}

This work has taken place in the Learning Agents Research Group (LARG) at UT Austin.  LARG research is supported in part by NSF (FAIN-2019844, NRT-2125858), ONR (N00014-24-1-2550), ARO (W911NF-17-2-0181, W911NF-23-2-0004, W911NF-25-1-0065), DARPA (Cooperative Agreement HR00112520004 on Ad Hoc Teamwork) Lockheed Martin, and UT Austin's Good Systems grand challenge. Peter Stone serves as the Chief Scientist of Sony AI and receives financial compensation for that role.  The terms of this arrangement have been reviewed and approved by the University of Texas at Austin in accordance with its policy on objectivity in research.

Caleb Probine and Ufuk Topcu were funded by the Office of Naval Research under grant number ONR N00014-25-1-2479 and the Air Force Office of Scientific Research under grant number AFOSR FA9550-22-1-0403.

\bibliography{refs}


\end{document}